\documentclass[12pt,draftcls,a4paper,onecolumn]{IEEEtran}
\usepackage[english]{babel}
\usepackage{url}
\usepackage{graphicx}
\usepackage{amsmath}
\usepackage{amsfonts}
\usepackage{amssymb}

\ifx\pdftexversion\undefined
\else
\fi            




%

\newtheorem{proposition}{Proposition}



\def\B{\mathcal{B}}

\def\nontrans{\texttt{Non-transitive}}
\def\polish{\texttt{Transitive-predefined}}
\def\russianT{\texttt{Transitive-ab-initio}}
\def\russianNT{\texttt{Non-tree-transitive}}






\def\q#1{\mathtt{#1}}

\def\1{\q{1}}
\def\0{\q{0}}


\newcommand{\al}{\alpha}
\newcommand{\II}{{\cal{I}}}
\newcommand{\G}{\mathcal{B}}

\begin{document}

\title{On subset seeds for protein alignment\thanks{A preliminary
    version of this paper appeared in the proceedings of the
    ALBIO'08 workshop (Vienna, Austria
July 7-9, 2008)}
}

\author{Mikhail Roytberg, Anna Gambin, Laurent No{\'e}, S\l{}awomir
  Lasota, Eugenia Furletova, Ewa Szczurek, Gregory Kucherov 
\thanks{Mikhail Roytberg and Eugenia Furletova are with the 
Institute of Mathematical Problems in Biology,
Pushchino, Moscow Region, 142290, Russia, e-mail: {\tt
  mroytberg@mail.ru, furletova@impb.psn.ru}}
\thanks{Anna Gambin and S\l{}awomir Lasota are with the
Institute of Informatics, Warsaw University, Banacha 2, 02-097, Poland, \texttt{\{aniag|S.Lasota\}@mimuw.edu.pl}}
\thanks{Ewa Szczurek is with the 
Max Planck Institute for Molecular Genetics,
Computational Molecular Biology,
Ihnestr. 73,
14195 Berlin, Germany, \texttt{ewa.szczurek@molgen.mpg.de}}
\thanks{Gregory Kucherov and Laurent No\'e are with 
LIFL/CNRS/INRIA, B\^at. M3, Campus Scientifique,
59655 Villeneuve d'Ascq C\'edex, France, \texttt{\{Gregory.Kucherov|Laurent.Noe\}@lifl.fr}}
}



\maketitle
\begin{abstract}
We apply the concept of {\em subset seeds} proposed in
\cite{KucherovNoeRoytberg06} to similarity search in protein
sequences. The main question studied is the design of efficient 
{\em seed alphabets} to construct seeds with optimal
sensitivity/selectivity trade-offs. We propose several different
design methods and use them to construct several alphabets. We then
perform a comparative analysis of seeds built over those alphabets and
compare them with the standard {\sc Blastp} seeding method
\cite{BLAST90,GBLAST97}, as well as 
with the family of vector seeds proposed in \cite{BrownTCBB05}. While
the formalism of subset seeds is less expressive (but less costly to
implement) than the cumulative
principle used in {\sc Blastp} and vector seeds, our seeds show a
similar or even better performance than {\sc Blastp} on Bernoulli
models of proteins compatible with the common BLOSUM62
matrix. Finally, we perform a large-scale benchmarking of our seeds
against several main databases of protein alignments. Here again, the
results show a comparable or better performance of our seeds {\em vs.} 
{\sc Blastp}. 
\end{abstract}

\begin{keywords}
protein sequences, protein databases, local alignment, similarity search, seeds, subset seeds, multiple seeds, seed alphabet, sensitivity, selectivity
\end{keywords}

\section{Introduction}

Similarity search in protein sequences is probably the most classical
bioinformatics problem, and a commonly used algorithmic solution is
implemented in the ubiquitous {\sc Blast} software
\cite{BLAST90,GBLAST97}. On the other hand, similarity search
algorithms for nucleotide sequences (DNA, RNA) underwent
several years ago a significant improvement due to the idea of 
{\em spaced seeds} and its various generalizations
\cite{PatternHunter02,BLAT02,PatternHunter04,NoeKucherovNAR05,MakGelfandBensonBioinformatics06,CsurosMaAlgorithmica07}. 
This development, however, has little affected protein sequence
comparison, although improving the speed/precision trade-off for protein
search would be of great value for numerous bioinformatics projects. 
Due to a bigger alphabet
size, protein seeds are much shorter (typically 2-5 letters instead of
10-20 letters in the DNA case) and also letter identity is much less
relevant in defining hits than in the DNA case. For these reasons, the
spaced seeds technique might seem not to apply directly to protein
sequence comparison. 

Recall that {\sc Blast} applies quite different approaches to protein
and DNA sequences to define a hit. In the DNA case, a hit is defined as a
short pattern of identically matching nucleotides whereas in the
protein case, a hit is defined through a {\em cumulative}
contribution of a few amino acid matches (not necessarily identities) 
using a given {\em scoring matrix}. 
Defining a hit through an additive contribution of several positions
is captured by a general formalism of {\em vector seeds} proposed in
\cite{BrejovaBrownVinarJCSS05}. On the other hand, it has been
understood
\cite{PatternHunter04,BuhlerRECOMB04,KucherovNoeRoytberg04,YangWangChenEtAlBIBE04,XuBrownLiMaCPM04}
that using simultaneously a {\em family} of seeds instead of a single
seed can further improve the sensitivity/selectivity ratio. Papers
\cite{BrownTCBB05,tPatternHunter05} both propose solutions using
a family of vector seeds to surpass the performance of {\sc Blast}. 

However, using the principle of cumulative score over several
adjacent positions has an algorithmic cost. 
Defining a hit through a pattern of exact letter matches
allows for a {\em direct hashing} scheme, where each key of the query
sequence is associated with a {\em unique} hash table entry pointing to
the positions of the subject sequence (database) where the key can hit. 
Usually these positions are stored in consecutive memory cells within
the hash table. 

On
the other hand, defining a hit through a cumulative contribution of
several positions leads to an additional pre-computed table that stores,
for each key, its {\em neighborhood} i.e., the list of subject keys
(or corresponding hash table entries) with which it can form a
hit. For example, in a standard {\sc Blastp} setting (Blosum62 scoring
matrix with threshold 11 for cumulative score of three 
positions), the expectation, computed according to the Bernoulli
sequence model, of the number of neighbors of a key is 
19.34, 
i.e. that many accesses to the hash table are required for each key. 
For four positions and threshold 18, as in the case of 
seeds from \cite{BrownTCBB05}, 
a key hits expectedly 15.99 keys
and this number grows up to 45.59 when the score threshold decreases to 16. 
This raises an obvious memory problem: for example, for key size 4 and
score threshold 18, the total size of neighborhoods is 7609575, and
for key size 5 the neighborhood table may simply not fit into the memory. 
Another related implementation problem is cache usage: different keys
of a neighborhood generally correspond to remote segments of the hash
table and their processing gives rise to cache misses that cause
additional latencies. 

Those implementation issues may become a bottleneck in large-scale
protein comparisons. Furthermore, solving these problems may be
very helpful in different specific experimental setups, such as in mapping
protein comparison algorithms to specialized computer architecture
(see e.g. \cite{PeterlongoEtAlPBC07,NguyenLavenierRIVF08}) where memory usage may be a
crucial issue. 

In \cite{KucherovNoeRoytberg06}, we proposed a new concept of 
{\em subset seeds} that can be viewed as an intermediate between
ordinary spaced seeds and vector seeds: subset seeds allow one to
distinguish between different types of mismatches (or matches) but
still treat seed positions independently rather than
cumulatively. Distinguishing different mismatches is not done by
scoring them, but by extending the seed alphabet such that each seed
letter specifies different sets of mismatches. For example, in the DNA
case it is beneficial to distinguish between transition mutations
({\tt A} $\leftrightarrow$ {\tt G}, {\tt C} $\leftrightarrow$ {\tt T})
and others (transversions)
\cite{NoeKucherovBMC04,ZhouStantonFloreaBMCBioinformatics08}. This
leads (at least in the case of {\em transitive} seed alphabets defined
in this paper) to the possibility of using the direct hashing. 

Since the protein alphabet is much larger than the one of DNA, subset
seeds provide a very attractive seeding option for protein
alignment. 
In this paper, we study the performance of subset seeds applied to
protein sequences and compare it to existing seeding techniques of
{\sc Blastp} and vector seeds. 

Note again that subset seeds are less expressive than {\sc Blast}
seeds or vector seeds in general, but in return, admit a more
efficient implementation. Besides treating positions independently,
subset seeds replace amino acid substitution scores by simply distinguishing different classes
of mismatches. Therefore, another way to state the motivation of this
work is to ask {whether scores are really necessary at the seeding
stage of protein alignment}. We will show that with a reasonable
level of precision the answer to this question is negative. 

In the paradigm of subset seeds, each seed letter specifies a set of
amino acid pairs matched by this letter. Therefore, a crucial question
is the design of an appropriate {\em seed alphabet}, which is one of
the main problems we study in this paper. {\em In fine}, the quality
of an alphabet is determined by the quality of the best seeds that can
be constructed over this alphabet. The latter is already a complex
optimization problem that is usually solved in practice by heuristic
methods. (For a formal analysis of seed design problem we refer to the
recent paper \cite{MaYaoAPBC08} and references therein.) The
problem of alphabet design studied in this paper presents an
additional complexity as it introduces an additional dimension
of the search space (set of possible alphabets), and additionally
requires a study of selectivity/sensitivity dependencies rather than simply
maximizing the sensitivity for a class of seeds with a given
selectivity. In this paper we propose several heuristic methods that
lead to the design of efficient seed alphabets and corresponding
seeds.

The paper is organized as follows. In
Section~\ref{section:preliminaries}, we introduce some probabilistic
notions we need to reason about seed
efficiency. Section~\ref{section:non-transitive-seed-alphabet}
introduces the first simple approach to design a seed alphabet, which,
however, does not lead to so-called {\em transitive} seeds, useful in
practice. Section~\ref{section:transitive-seed-alphabet} presents
three 
different approaches to designing transitive seed alphabets, based on
a pre-defined (Section~\ref{subsection:prebuilded-tree}) or newly
designed (Section~\ref{subsection:ab-initio-clustering}) hierarchical
clustering of amino acids, as well as on a non-hierarchical clustering
(Section~\ref{subsection:non-hierarchical}).
Section~\ref{section:experiments} describes comparative experiments
made with the designed seeds, obtained both on probabilistic models and
on different protein data banks.

\section{Preliminaries}
\label{section:preliminaries}

Throughout the paper, we denote $\Sigma=\{\mathtt{A, C, D, E, F, G, H, I,
  K, L, M, N, P, Q, R, S, T, V, W, Y}\}=\{a_i\}_{i=1..20}$ the alphabet of amino acids. 

In most general terms, a {\em (subset) seed letter} $\alpha$ is
defined as any symmetric and reflexive binary relation on 
$\Sigma$. Let $\B$ be a {\em seed alphabet}, i.e. a collection of 
subset seed letters. Then a {\em subset seed} $\pi=\alpha_1 \ldots
\alpha_k$ is a word over $\B$, where $k$ is called the {\em span} of $\pi$. $\pi$ defines a 
symmetric and reflexive binary relation on words of $\Sigma^k$ (called
{\em keys}): for 
$s_1,s_2\in \Sigma^k$, $s_1 \sim_\pi s_2$ iff $\forall i\in [1..k]$, we have
$\langle s_1[i],s_2[i]\rangle\in \alpha_i$. In this case, we say that
seed $\alpha$ {\em hits} the pair $s_1,s_2$. 

For practical reasons, we would like seed letters to define a 
{\em transitive} relation, in addition. This induces an equivalence
relation on keys, which is very convenient and allows for an efficient
indexing scheme (see Introduction). In this paper, we will be mainly
interested in transitive seed letters, but we will also study
the non-transitive case in order to see how restrictive the
transitivity condition is. 

The quality of a seed letter or of a seed is characterized by two main
parameters:  
{\em sensitivity} and {\em selectivity}. They are defined through
background and foreground probabilistic models of protein alignments.
Foreground probabilities are assumed to represent the distribution of
amino acids matches in proteins of interest, when two homologous
proteins are aligned together. Background probabilities, on the other
hand, represent the distribution of amino acid matches in 
{\em random alignments}, when two proteins are randomly aligned
together. 




In this paper, we restrict ourselves to Bernoulli models of proteins
and protein alignments, although some of
the results we will present can be extended to Markov models. 

Assume that we are given background probabilities
$\{b_1,\ldots,b_{20}\}$ of amino acids in protein sequences under
interest. The {\em background probability} of a seed letter
$\alpha$ is defined by $b(\alpha)=\sum_{(a_i, a_j)\in \alpha} b_i
b_j$. The {\em selectivity} of $\alpha$ is $1-b(\alpha)$ and the 
{\em weight} of $\alpha$ is defined by 
\begin{equation}
w(\alpha)=\frac{\log b(\alpha)}{\log b(\#)},
\end{equation}
where $\#=\{\langle a,a\rangle|a\in\Sigma\}$ is the
``identity'' seed letter.  
For a seed $\pi=\alpha_1 \ldots \alpha_k$, the background probability
of $\pi$ is $b(\pi)=\prod_{i=1}^k b(\alpha_i)$, the selectivity of
$\pi$ is $1-b(\pi)$ and the weight of $\pi$ is $w(\pi)=\log_{b(\#)}
  b(\pi)=\sum_{i=1}^k w(\alpha_i)$. 
%
Note that the weight here generalizes the weight of classical
spaced seeds \cite{KeichLiMaTrompDAM04} defined as the number of
``identity'' letters it contains. 

Let $f_{ij}$ be the probability to see the pair $\langle a_i,a_j\rangle$ 
aligned in a target alignment. The {\em foreground probability} of a
seed letter $\alpha$ is defined by $f(\alpha)=\sum_{(a_i, a_j)\in \alpha} f_{ij}$. 
The {\em sensitivity} of a seed $\pi$ is defined as
the probability to hit a
random target alignment\footnote{Note that our definitions of
  sensitivity and selectivity are not symmetric: sensitivity is
  defined with respect to the entire alignment and selectivity with respect to a
single alignment position. These definitions capture better the
intended parameters we want to measure. However, selectivity could
also be defined with respect to the entire alignment. We could suggest
the term {\em specificity} for this latter definition.}. 
Assume that target alignments are specified
by a length $N$. Then the sensitivity of a seed
$\pi=\alpha_1 \ldots\alpha_k$ is
the probability that a randomly drawn gapless alignment (i.e. string of
pairs $\langle a_i,a_j\rangle$) of length $N$ contains a fragment of length $k$
which is matched by $\pi$. In \cite{KucherovNoeRoytberg06} we proposed
a general algorithm to efficiently compute the seed sensitivity for a 
broad class of target alignment models. This algorithm will be used
in the experimental part of this work. 

The general problem of seed design is to obtain seeds with good
sensitivity/selectivity trade-off. Even within a fixed seed formalism, 
the quality of a seed is dependent on the chosen selectivity value. 
This is why we will always be interested in
computing efficient seeds 
for a large range of selectivity levels. 

\section{Dominating seed letters}
\label{section:non-transitive-seed-alphabet}

Our main question is how to choose seed letters that form good seeds?
Intuitively, ``good letters'' are those that best distinguish
foreground and background letter alignments. 

For each letter $\alpha$, consider its foreground and background
probabilities $f(\alpha)$ and $b(\alpha)$ respectively. 
Intuitively, we would like to have
letters $\alpha$ with large $f(\alpha)$ and small $b(\alpha)$. A
letter $\alpha$ is said to {\em dominate} a letter $\beta$ if $f(\alpha)\geq
f(\beta)$ and $b(\alpha)\leq b(\beta)$. Observe that in this case,
$\beta$ can be removed from consideration, as it can always be
advantageously replaced by $\alpha$. 


Consider all amino acid pairs $(a_i,a_j)$ ordered by descending 
{\em likelihood ratio} $f_{ij}/b_i b_j$. Consider the set of pairs
$R(t)=\{(a_i,a_j)\,|\,f_{ij}/b_i b_j>t\}$. Then 
the following statement holds\footnote{It is interesting to
point out the relationship to the Neyman-Pearson lemma which is a more
general formulation of this statement.}.
\begin{proposition}
\label{prop-dominating}
$R(t)$ cannot be dominated by any other letter.
\end{proposition}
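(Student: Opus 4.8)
The plan is to prove the contrapositive-flavored statement directly: suppose some seed letter $\beta$ dominates $R(t)$, i.e. $f(\beta)\geq f(R(t))$ and $b(\beta)\leq b(R(t))$, and show $\beta$ must in fact coincide with $R(t)$ (so it does not dominate it in the strict sense, or dominates it trivially). The key observation is that $R(t)$ is exactly the set of pairs whose pointwise likelihood ratio $f_{ij}/(b_ib_j)$ exceeds the threshold $t$, so every pair inside $R(t)$ contributes ``a lot'' of foreground mass per unit of background mass, and every pair outside contributes ``a little.'' I would make this quantitative via the pointwise inequalities: for $(a_i,a_j)\in R(t)$ we have $f_{ij} > t\, b_ib_j$, and for $(a_i,a_j)\notin R(t)$ we have $f_{ij}\leq t\, b_ib_j$.

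First I would write $\beta = (\beta\cap R(t)) \cup (\beta\setminus R(t))$ and correspondingly split $f(\beta) = \sum_{(i,j)\in\beta\cap R(t)} f_{ij} + \sum_{(i,j)\in\beta\setminus R(t)} f_{ij}$, and similarly for $b(\beta)$. Using $f_{ij}\leq t\,b_ib_j$ on $\beta\setminus R(t)$ and $f_{ij}> t\,b_ib_j$ on $R(t)\setminus\beta$, I get
\begin{equation}
f(\beta) \leq f(R(t)\cap\beta) + t\cdot b(\beta\setminus R(t)),
\end{equation}
while
\begin{equation}
f(R(t)) = f(R(t)\cap\beta) + \sum_{(i,j)\in R(t)\setminus\beta} f_{ij} \geq f(R(t)\cap\beta) + t\cdot b(R(t)\setminus\beta).
\end{equation}
Subtracting, the assumption $f(\beta)\geq f(R(t))$ forces $t\cdot b(\beta\setminus R(t)) \geq t\cdot b(R(t)\setminus\beta)$, i.e. (since $t>0$) $b(\beta\setminus R(t)) \geq b(R(t)\setminus\beta)$. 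But this is equivalent to $b(\beta) \geq b(R(t))$, which together with the domination hypothesis $b(\beta)\leq b(R(t))$ yields $b(\beta)=b(R(t))$, and then all the intermediate inequalities are equalities. Tracing back the equality case of $f_{ij}> t\,b_ib_j$ on $R(t)\setminus\beta$ shows $R(t)\setminus\beta=\emptyset$ (strict inequality can't be tight), and the equality in the background masses then forces $\beta\setminus R(t)=\emptyset$ as well, so $\beta=R(t)$.

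The main obstacle, and the only place that needs a little care, is handling boundary pairs where $f_{ij}/(b_ib_j)$ equals $t$ exactly (the definition uses strict inequality $>t$, so such pairs lie outside $R(t)$), and symmetrizing: seed letters are required to be symmetric and reflexive, so I should note that $R(t)$ is itself symmetric because $f_{ij}=f_{ji}$ and $b_ib_j=b_jb_i$, and include the identity pairs or argue they are irrelevant to the comparison — if reflexivity is imposed on all letters uniformly then the diagonal contributes the same constant to every $f(\cdot)$ and $b(\cdot)$ and cancels. I would also remark, as the footnote hints, that this is precisely the Neyman–Pearson lemma: $R(t)$ is the likelihood-ratio test region, which is most powerful (maximal $f$) among all regions of its size (same or smaller $b$), so no region can beat it on both axes. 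The argument above is essentially the standard swapping/exchange proof of Neyman–Pearson specialized to this finite setting.
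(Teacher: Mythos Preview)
Your proof is correct and follows essentially the same Neyman--Pearson exchange argument as the paper: both isolate the set differences $R(t)\setminus\beta$ and $\beta\setminus R(t)$, use the pointwise bounds $f_{ij}>t\,b_ib_j$ on the former and $f_{ij}\le t\,b_ib_j$ on the latter, and compare. The paper phrases this as a direct contradiction (naming the differences $\beta$ and $\gamma$ and chaining $f(\beta)>t\,b(\beta)\ge t\,b(\gamma)\ge f(\gamma)$), whereas you trace the equality case to conclude $\beta=R(t)$; your handling of the boundary issues (reflexivity, the case $t=0$, pairs with ratio exactly $t$) is in fact more careful than the paper's terse version.
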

\begin{proof}
Assume by contradiction that $R(t)$ is dominated by some letter
$\alpha$, i.e. $f(\alpha)\geq f(R(t))$ and $b(\alpha)\leq
b(R(t))$. Consider $\beta=R(t)\setminus \alpha$ and
$\gamma=\alpha\setminus R(t)$. Clearly, $f(\beta)\leq f(\gamma)$ and
$b(\beta)\geq b(\gamma)$. On the other hand, $\forall (a_i,a_j)\in
\beta$, $f_{ij}/b_ib_j>t$ and $\forall (a_i,a_j)\in
\gamma$, $f_{ij}/b_ib_j\leq t$. This implies that
$f(\beta)=\sum_{(a_i,a_j)\in\beta}f_{ij}>t\sum_{(a_i,a_j)\in\beta}b_{i}b_{j}=tb(\beta)$
and similarly $f(\gamma)\leq tb(\gamma)$. We then have
$f(\beta)>tb(\beta)\geq tb(\gamma)\geq f(\gamma)$
which contradicts $f(\beta)\leq f(\gamma)$. 
\end{proof}
Proposition~\ref{prop-dominating} suggests that letters $R(t)$ are
good candidates to be included to the seed alphabet.
%
\begin{figure}\center
\includegraphics[width=0.6\textwidth]{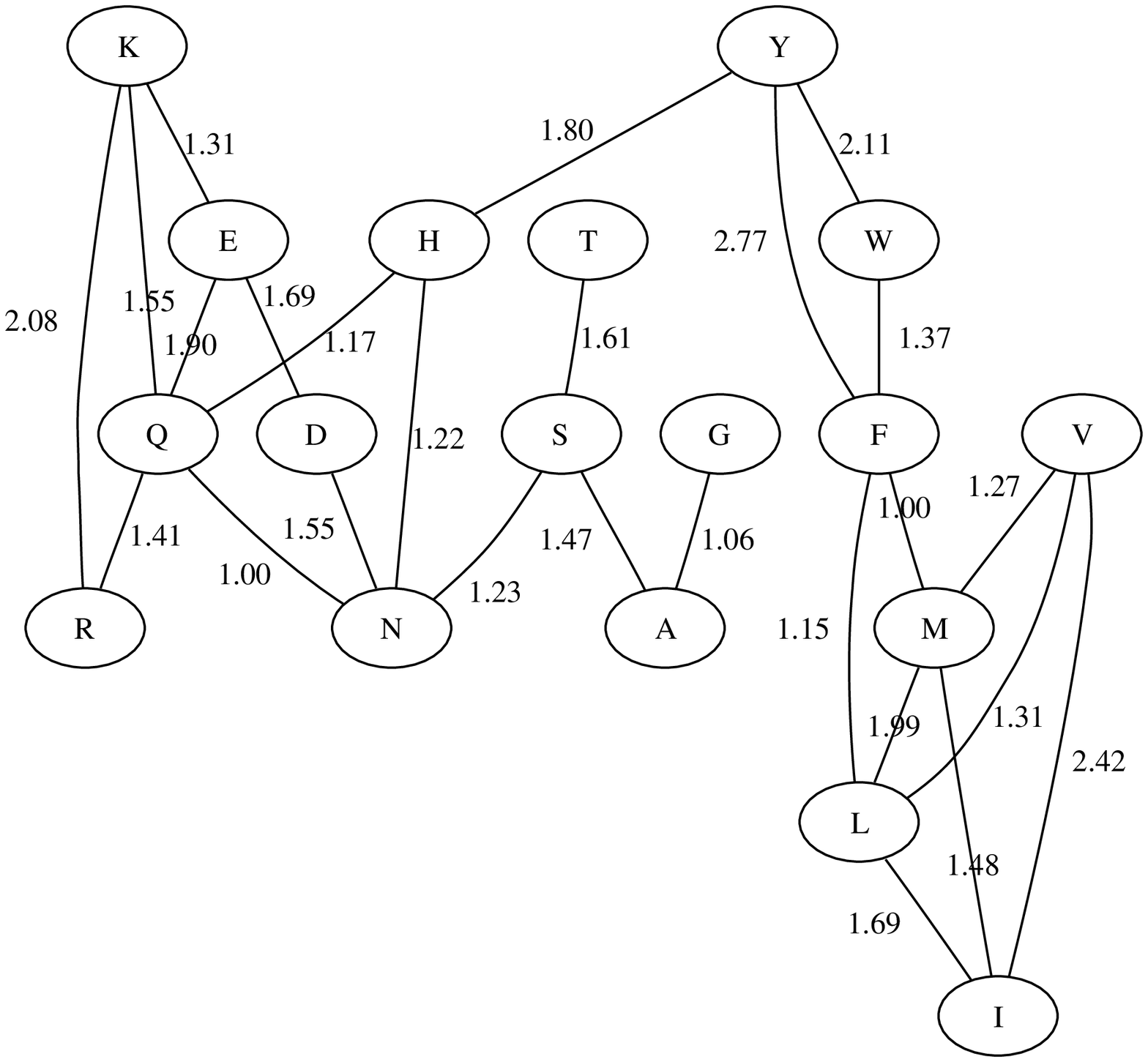}
\caption{Amino acid pairs forming letter $R(1)$
of alphabet {\nontrans}}
\label{figure:likelihood_threshold}
\end{figure}
%

\textit{Resulting alphabet.} 
We computed the likelihood ratio for all amino acid pairs, based on
practical values of 
background and foreground probabilities computed in
accordance with the BLOSUM62 matrix (see Section~\ref{subsection:methods}). 
Not surprisingly, amino acid identities (pairs $\langle a,a\rangle$)
have highest likelihood scores varying from $38.11$ for tryptophan (W)
down to $3.69$ for valine (V). 

Among non-identical pairs, only $25$ have a score greater than $1$
(Figure~\ref{figure:likelihood_threshold}). A quick
analysis shows that those do
not form a transitive relation, and therefore $R(1)$ does not verify the
transitivity requirement. This is also the case for other threshold
values. 

We analyzed a family of threshold letters
$R(t)$ for $t$ ranging from $0$ to $3$ with step $0.05$. At the
extremities of this interval, $R(0)$
is the ``joker'' letter admitting all amino acid pairs, and $R(3)$ is the
letter corresponding to the exact match relation. Among all those
letters,
there are only $34$ different ones. This alphabet of $34$ letters 
(data not shown),
denoted {\nontrans}, will be used in the experimental part of the paper 
(Section~\ref{section:experiments}) in order to study how restrictive
the requirement of 
transitive letters is, i.e. how much better are general seeds than those
obtained with the restriction of transitivity. 


\section{Transitive seed alphabets}
\label{section:transitive-seed-alphabet}
In the case of transitive seed alphabets, every letter
$\al\in\B$ is a partition of the amino acid alphabet $\Sigma$. 
In other words, the binary relation associated with each letter 
(cf Section~\ref{section:preliminaries}) is an equivalence relation. 
Transitive alphabets represent the practical case when each amino acid 
is uniquely mapped to its equivalence class. This, in turn, allows
for an efficient hashing scheme during the stage of seed search, when 
different entries of the hash table index non-intersecting subsets of keys. 


In
Sections~\ref{subsection:prebuilded-tree} and \ref{subsection:ab-initio-clustering} below,
we explore transitive seed alphabets satisfying an additional  
``hierarchy condition'': for any two seed letters $\al_1,\al_2\in\B$ corresponding to 
partitions $P_{\alpha_1},P_{\alpha_2}$ respectively, one of
$P_{\alpha_1},P_{\alpha_2}$ is a refinement of the other. Formally,
\begin{equation}
\mbox{for any }\al_1,\al_2\in\B, \mbox{ either } \al_1 \prec \al_2, \mbox{
  or } \al_2 \prec \al_1,
\label{equation:refinement}
\end{equation}
where $\al\prec\beta$ means that every set of $P_{\beta}$ is a subset
of some set of $P_{\alpha}$. 

The purpose of the above requirement is to define seed letters using a 
biologically significant hierarchical clustering of amino acids represented
by a tree. In Section~\ref{subsection:prebuilded-tree}, we will use a 
pre-defined hierarchical clustering to design efficient seed alphabets.
Then in Section~\ref{subsection:ab-initio-clustering}, we construct our
own clustering based on appropriate background and
  foreground models of amino acids distribution.
Finally, in Section~\ref{subsection:non-hierarchical} we lift condition
(\ref{equation:refinement}) and study ``non-hierarchical'' seed alphabets. 

\subsection{Transitive alphabets based on a pre-defined clustering}
\label{subsection:prebuilded-tree}
Assume we have a biologically significant hierarchical clustering tree which is a
rooted binary tree $T$ with $20$ leaves labeled by amino acids. Such
trees have been proposed in
\cite{LiFanWangWangJPE03,MurphyWallqvistLevyJPE00}, based on different
similarity relations. 
%
%
The hierarchical tree derived from \cite{LiFanWangWangJPE03} is shown in 
Figure~\ref{figure:tree1}. %
\begin{figure}
\includegraphics[height=.4\textheight,width=\textwidth]{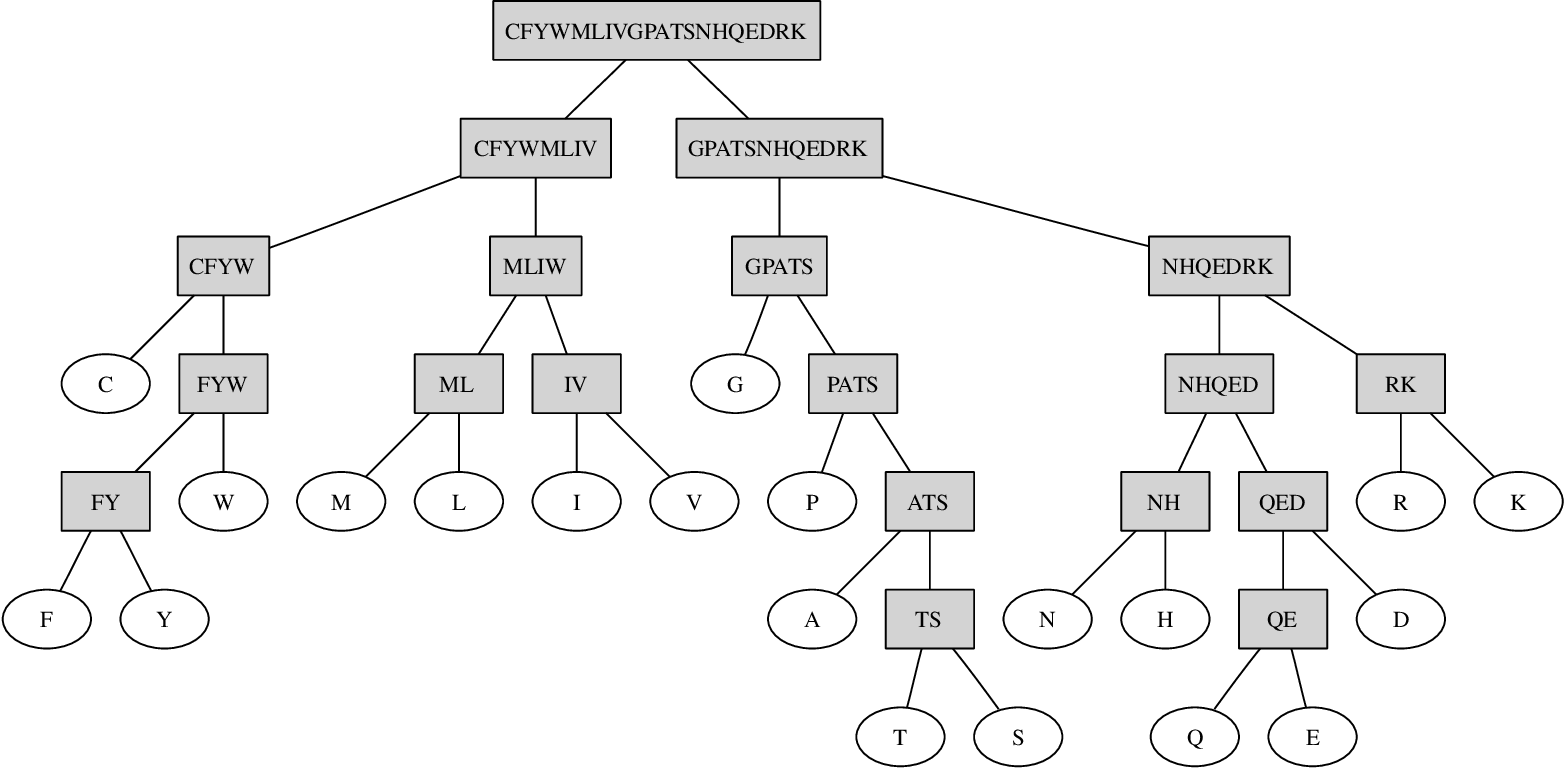}
\caption{Hierarchical tree derived from~\cite{LiFanWangWangJPE03}}
\label{figure:tree1}
\end{figure}
The tree, obtained with a purely bioinformatics analysis, groups
together amino acids with similar biochemical properties, such as
hydrophobic amino acids {\tt L,M,I,V}, hydrophobic aromatic amino
acids {\tt F,Y,W}, amino acids with an alcohol group {\tt S,T}, or charged/polar amino acids
{\tt E,D,N,Q}. A similar grouping has been obtained in
\cite{MurphyWallqvistLevyJPE00}. 


A {\em seed letter} is defined here as a subset $\alpha$ of nodes of $T$
such that
\begin{itemize}
\item[(i)] $\alpha$ contains all leaves,
\item[(ii)] for a node $v$, if $v\in\alpha$, then all descendants of
  $v$ belong to $\alpha$ too.
\end{itemize}
In other words, a seed letter can be thought of as a ``horizontal
cut'' of the tree. 
Clearly, each letter induces a partition on the set of leaves (amino
acids). For example, for the tree on Figure~\ref{figure:tree1}, a
letter defined by the cut through nodes {C}, {FYW}, {MLIV}, {G}, {P},
{ATS}, {NHQEDRK} corresponds to the partition {\tt
  \{\{C\},\{FYW\},\{MLIV\},\{G\},\{P\},\{ATS\},\{NHQEDRK\}\}}. 

Seed letters are naturally ordered by inclusion.
The smallest one is the ``identity'' seed letter $\#$,
containing only the leaves.
The largest one is the ``joker'' seed letter $\_\,$,
containing all the nodes of $T$.
One particular seed letter is obtained by removing from
$\_$ the root node. We denote it by $@$.

Observe that each seed letter $\alpha$ represents naturally
an equivalence relation on $\Sigma$:
$a_i$ and $a_j$ are related iff their common ancestor belongs to
$\alpha$. It is identity relation in case of $\#$ and
full relation in case of $\_\,$.



Following condition (\ref{equation:refinement}), a \emph{hierarchical seed alphabet} is a family $\G$ of seed letters
such that 
\begin{equation}
\mbox{for every }\al_1, \al_2 \in \G,
\mbox{ either }\al_1 \subseteq \al_2\mbox{ or }\al_2 \subseteq \al_1. 
\label{equation:consistent}
\end{equation}
Hence, in mathematical terms, a seed alphabet is a chain in the inclusion
ordering of seed letters. Each hierarchical alphabet can be obtained
by a series of refinements (set splittings) of its least refined
letter. 

Let us analyze what are the maximal seed alphabets
wrt.\ inclusion. Clearly each maximal seed alphabet $\G$ always
contains the smallest and the largest seed letters $\#$ 
and $\_$\ .
Interestingly, each maximal alphabet $\G$ contains also $@$,
as $@$ is comparable (by inclusion) to any other seed letter. 

It can be shown that under the above definitions, any maximal seed alphabet contains exactly $20$
letters that can be obtained by a stepwise merging of two subtrees
rooted at immediate descendants of some node $v$ into the subtree
rooted at $v$. Therefore, since a binary tree with $n$ leaves contains
$n-1$ internal nodes, a maximal seed alphabet contains precisely $20$
letters and can be specified by a permutation of internal nodes in
tree $T$. 

\textit{Seed alphabets and constraint independence systems.}
It is interesting to observe that the set of seed alphabets forms a
{\em constrained independence system} \cite{triangul}. An independence
system is a collection of subsets $ \II \subseteq 2^E$ over a ground
set $E$, called {\em independent sets}, 
such that {\em (i)} $\emptyset \in \II$, and {\em (ii)} if $X \in \II$ and
$Y \subseteq X$, then $Y \in \II$. A maximal (w.r.t. inclusion) independent 
set is called a {\em base}. 

Let $E$ be the set of all possible seed letters as
defined earlier. Then alphabets 
verifying (\ref{equation:consistent}) form an independence system, where bases
correspond to maximal seed alphabets. Moreover, seed alphabets verify
two additional conditions of {\em constrained independent system} \cite{triangul}:
{\em (iii)} if $X, Y \in \II$ with $|Y| < |X|$, then there is an element
$e \in E \setminus Y$ such that $Y \cup \{e\}\in\II$, and {\em (iv)} the
cardinality of every minimal (w.r.t inclusion) set of $2^E\setminus\II$ is two. 

The interest of this observation follows from results of
\cite{triangul} showing that some optimization problems on constrained
independence systems can be solved efficiently by greedy algorithms. 
Assume we have a score function $s : E \to R$
that we extend additively to independent sets by
$s(X)=\sum_{e\in X}s(e)$. For an independence system $\II$, we want to 
find a base $X\in\II$ with optimal (maximal or minimal) $s(X)$. 
For constrained independence systems,
it was proved~\cite{triangul}  that the greedy algorithm
yields a base which is {\em locally  optimal}, i.e. better than any
neighbor base 
$Y=(X\setminus\{\alpha_1\})\cup\{\alpha_2\}$ for some $\alpha_1\in X$,
$\alpha_2\in E\setminus X$. Here, the greedy algorithm starts with the
empty set and iteratively adds most optimal elements of $E$ as long as
the current set remains independent. The absolute optimum is hard to
compute in general, and the greedy solution is an
approximation of it. 

\textit{Assigning letter score.}
The above setting requires that each letter $\alpha$ is assigned a
score that, intuitively, should measure the ``usefulness'' of
$\alpha$ in a potential alphabet. Defining such a measure is a
difficult question as there are too many potential alphabets and
we can not check them all exhaustively. Therefore, we chose to
consider only small alphabets $\G_{\al}$, containing $\al$ together
with a few other letters that are always present in a good seed
alphabet. Those letters are $\{\_, @, \# \}$. 
The experiments reported in
Section~\ref{section:experiments} use the alphabet 
$\G_{\al}=\{\_,\al\}$. 

\begin{figure}[b!]
{\tiny
\begin{eqnarray*}
& \{CFYWMLIVGPATSNHQEDRK\} \\
& \{CFYWMLIV\}\;        \{GPATSNHQEDRK\}\;\\
& \{CFYWMLIV\}\;        \{GPATS\}\;   \{NHQEDRK\}\;\\
& \{CFYW\}\;    \{MLIV\}\;    \{GPATS\}\;   \{NHQEDRK\}\;\\
& \{CFYW\}\;    \{MLIV\}\;    \{G\}\;       \{PATS\}\;    \{NHQEDRK\}\; \\
& \{C\}\;       \{FYW\}\;     \{MLIV\}\;    \{G\}\;       \{PATS\}\;    \{NHQEDRK\}\; \\
& \{C\}\;       \{FYW\}\;     \{MLIV\}\;    \{G\}\;       \{P\}\;       \{ATS\}\;     \{NHQEDRK\}\; \\
& \{C\}\;       \{FY\}\;      \{W\}\;       \{MLIV\}\;    \{G\}\;       \{P\}\;       \{ATS\}\;     \{NHQEDRK\}\;   \\     
& \{C\}\;       \{F\}\;       \{Y\}\;       \{W\}\;       \{MLIV\}\;    \{G\}\;       \{P\}\;       \{ATS\}\;     \{NHQEDRK\}\;     \\          
& \{C\}\;       \{F\}\;       \{Y\}\;       \{W\}\;       \{MLIV\}\;    \{G\}\;       \{P\}\;       \{A\}\;        \{TS\}\;      \{NHQEDRK\}\; \\
& \{C\}\;       \{F\}\;       \{Y\}\;       \{W\}\;       \{MLIV\}\;    \{G\}\;       \{P\}\;       \{A\}\;       \{T\}\;       \{S\}\;       \{NHQEDRK\}\;\\
& \{C\}\;       \{F\}\;       \{Y\}\;       \{W\}\;       \{MLIV\}\;    \{G\}\;       \{P\}\;       \{A\}\;       \{T\}\;       \{S\}\;       \{NHQED\}\;   \{RK\}\;  \\    
& \{C\}\;       \{F\}\;       \{Y\}\;       \{W\}\;       \{MLIV\}\;    \{G\}\;       \{P\}\;       \{A\}\;       \{T\}\;       \{S\}\;       \{NHQED\}\;   \{R\}\;       \{K\}\;    \\   
& \{C\}\;       \{F\}\;       \{Y\}\;       \{W\}\;       \{MLIV\}\;    \{G\}\;       \{P\}\;       \{A\}\;       \{T\}\;       \{S\}\;       \{NH\}\;      \{QED\}\;     \{R\}\;       \{K\}\;       \\
& \{C\}\;       \{F\}\;       \{Y\}\;       \{W\}\;       \{MLIV\}\;    \{G\}\;       \{P\}\;       \{A\}\;       \{T\}\;       \{S\}\;       \{N\}\;       \{H\}\;       \{QED\}\;     \{R\}\;       \{K\}\;    \\
& \{C\}\;       \{F\}\;       \{Y\}\;       \{W\}\;       \{MLIV\}\;    \{G\}\;       \{P\}\;       \{A\}\;       \{T\}\;       \{S\}\;       \{N\}\;       \{H\}\;       \{QE\}\;      \{D\}\;       \{R\}\;       \{K\}\;       \\
& \{C\}\;       \{F\}\;       \{Y\}\;       \{W\}\;       \{MLIV\}\;    \{G\}\;       \{P\}\;       \{A\}\;       \{T\}\;       \{S\}\;       \{N\}\;       \{H\}\;       \{Q\}\;       \{E\}\;       \{D\}\;       \{R\}\;       \{K\}\;       \\
& \{C\}\;       \{F\}\;       \{Y\}\;       \{W\}\;       \{ML\}\;      \{IV\}\;      \{G\}\;       \{P\}\;       \{A\}\;       \{T\}\;       \{S\}\;       \{N\}\;       \{H\}\;       \{Q\}\;       \{E\}\;       \{D\}\;       \{R\}\;       \{K\}\;       \\
& \{C\}\;       \{F\}\;       \{Y\}\;       \{W\}\;       \{M\}\;       \{L\}\;       \{IV\}\;      \{G\}\;       \{P\}\;       \{A\}\;       \{T\}\;       \{S\}\;       \{N\}\;       \{H\}\;       \{Q\}\;       \{E\}\;       \{D\}\;       \{R\}\;       \{K\}\;       \\
& \{C\}\;       \{F\}\;       \{Y\}\;       \{W\}\;       \{M\}\;       \{L\}\;       \{I\}\;       \{V\}\;       \{G\}\;       \{P\}\;       \{A\}\;       \{T\}\;       \{S\}\;       \{N\}\;       \{H\}\;       \{Q\}\;       \{E\}\;       \{D\}\;       \{R\}\;       \{K\}\;     
\end{eqnarray*}
}
\caption[Alphabet based on
  Figure~\ref{figure:tree1}]{Alphabet {\polish} designed using the tree of
  Figure~\ref{figure:tree1}. Each line corresponds to a seed letter (amino acid partition)}
\label{figure:pre-defined-alphabet}
\end{figure}
Given $\G_{\al}$, we define the score of $\al$ as follows. We
enumerate all seeds of a given span (typically, $5$ or $6$) over
$\G_{\al}$, and compute the sensitivity and
selectivity of each seed according to the protocol described in
Section~\ref{seed-design}. 
Each seed is then associated with a
point on a unit square with coordinates corresponding to sensitivity
and selectivity (see plots in Figure~\ref{ROC-curves} below). 
The distance of this point to point $(1, 1)$, denoted $\rho(\al)$, measures
how good the sensitivity and selectivity jointly are. Besides, the
number of occurrences of $\al$ in the seed should be taken into
account. Overall, we chose to compute the score of a letter by the
following formula: 
$$
w(\alpha)=\sum_\pi occ_{\pi}(\al)\cdot (\sqrt{2} - \rho(\al)),
$$
where the sum is taken over all seeds $\pi$ of a given span, and
$occ_{\pi}(\al)$ is the number of occurrences of $\al$ in $\pi$. 

\textit{Greedy algorithm.} 
Once every seed letter has been assigned a score, we compute the greedy
solution as follows. We compute the maximal
subset $L$ of {\em locally good letters}, i.e. letters
$\alpha$ that score better than any letter $\alpha'$ such that
$\{\alpha,\alpha'\}\not\in\II$. It can be shown that this subset is
independent and is included in the solution
computed by the greedy algorithm. Then we redefine $E$ and $\II$ by
$E'= E\setminus L$ and $\II' = \{Z\subseteq E'\,|\,Z\cup L\in\II\}$ and
apply the algorithm recursively to the independence system
$(E',\II')$. The union of all sets $L$ of locally good letters
computed along this procedure forms the solution of the greedy
algorithm. 

\textit{Resulting alphabet.}
Figure~\ref{figure:pre-defined-alphabet} shows alphabet 
{\polish} designed 
through the approach of this Section. 
The alphabet has been designed from the tree of
Figure~\ref{figure:tree1} and using the alphabet $\G_{\al}=\{\_,\al\}$ for
assigning the score of a letter $\alpha$. 
Each line in Figure~\ref{figure:tree1} corresponds to a
letter (amino acid partition). Among alphabets obtained by varying
different parameters in scoring individual letters 
(such as the alphabet and seed spans used in the scoring procedure), alphabet {\polish} produced best seeds and will be used in the
experimental part of this work (Section~\ref{section:experiments}). 

\subsection{Transitive alphabets using an {\em ab initio} clustering method}
\label{subsection:ab-initio-clustering}

\textit{Hierarchical clustering of amino acids.}
\label{subsubsection:ab-initio-clustering-algorithm}
A prerequisite to the approach of
Section~\ref{subsection:prebuilded-tree} is a given tree
describing a hierarchical clustering of amino acid based on some
similarity measure. In this section, we describe an 
approach that constructs {\em ab initio} a hierarchical clustering of
amino acids, using a likelihood measure. 
The approach can be seen as constructing a hierarchy of connected
components of a graph based on the likelihood relation considered in
Section~\ref{section:non-transitive-seed-alphabet} (see
Figure~\ref{figure:likelihood_threshold}) trying to build components
with high likelihood values.

As in Section~\ref{subsection:prebuilded-tree}, our goal here is
to construct a family of seed letters verifying the hierarchy condition
(\ref{equation:refinement}). 
This family will be obtained with a
simple greedy neighbor-joining clustering algorithm.
%
%
We start with the partition of amino acids into 20 singletons. This
partition corresponds to the $\#$ letter. For a current partition 
$P = \{C_1,\ldots,C_n\}$, iteratively apply the following procedure.
\begin{enumerate}
\item[1] For each pair of sets $C_k$, $C_\ell$,
  \begin{enumerate}
  \item[1.1] consider the set 
    $Bridge(C_k, C_\ell) = \{(a_i,a_j) | a_i \in C_k,\ a_j \in C_\ell\}$.
  \item[1.2] compute $ForeBridgeProb(k,\ell)=\sum\{f_{ij}|a_i\in C_k,\ a_j
    \in C_\ell\}$ and \linebreak[4]$BackBridgeProb(k,\ell)=\sum\{b_{i}b_j|a_i\in C_k,\ a_j
    \in C_\ell\}$,
  \item[1.3] compute $L(k,\ell) =  ForeBridgeProb(k,\ell) / BackBridgeProb(k,\ell)$ 
  \end{enumerate}
\item[2] Find the pair of sets $(C_k, C_\ell)$ yielding the maximal
  $L(k,\ell)$,
\item[3] Merge $C_k$ and $C_\ell$ into a new set, obtaining a new
  partition. 
\end{enumerate}

The rationale behind this simple procedure is that those two sets of
amino acids are merged together which produce the maximal increment in
the likelihood. 
An alternative method, when the likelihood of the whole
resulting set is maximized, yields biased results, as sets with a high
likelihood tend to ``absorb'' other sets. 

\begin{figure}[t!]\centering
{\tiny
\begin{eqnarray*}
& \{CFYWHMLIVPGQERKNDATS\}\; \\
& \{CFYWHMLIV\}\; \{PGQERKNDATS\}\; \\
& \{C\}\; \{FYWHMLIV\}\; \{PGQERKNDATS\}\; \\
& \{C\}\; \{FYWHMLIV\}\; \{P\}\; \{GQERKNDATS\}\; \\
& \{C\}\; \{FYWH\}\; \{MLIV\}\; \{P\}\; \{GQERKNDATS\}\; \\
& \{C\}\; \{FYWH\}\; \{MLIV\}\; \{P\}\; \{GATS\}\; \{QERKND\}\; \\
& \{C\}\; \{FYWH\}\; \{MLIV\}\; \{P\}\; \{G\}\; \{ATS\}\; \{QERKND\}\; \\
& \{C\}\; \{FYWH\}\; \{MLIV\}\; \{P\}\; \{G\}\; \{ATS\}\; \{QERK\}\; \{ND\}\; \\
& \{C\}\; \{FYW\}\; \{H\}\; \{MLIV\}\; \{P\}\; \{G\}\; \{ATS\}\; \{QERK\}\; \{ND\}\; \\
& \{C\}\; \{FYW\}\; \{H\}\; \{MLIV\}\; \{P\}\; \{G\}\; \{A\}\; \{TS\}\; \{QERK\}\; \{ND\}\; \\
& \{C\}\; \{FYW\}\; \{H\}\; \{MLIV\}\; \{P\}\; \{G\}\; \{A\}\; \{TS\}\; \{QE\}\; \{RK\}\; \{ND\}\; \\
& \{C\}\; \{FYW\}\; \{H\}\; \{ML\}\; \{IV\}\; \{P\}\; \{G\}\; \{A\}\; \{TS\}\; \{QE\}\; \{RK\}\; \{ND\}\; \\
& \{C\}\; \{FYW\}\; \{H\}\; \{ML\}\; \{IV\}\; \{P\}\; \{G\}\; \{A\}\; \{TS\}\; \{QE\}\; \{RK\}\; \{N\}\; \{D\}\; \\
& \{C\}\; \{FYW\}\; \{H\}\; \{ML\}\; \{IV\}\; \{P\}\; \{G\}\; \{A\}\; \{T\}\; \{S\}\; \{QE\}\; \{RK\}\; \{N\}\; \{D\}\; \\
& \{C\}\; \{FY\}\; \{W\}\; \{H\}\; \{ML\}\; \{IV\}\; \{P\}\; \{G\}\; \{A\}\; \{T\}\; \{S\}\; \{QE\}\; \{RK\}\; \{N\}\; \{D\}\; \\
& \{C\}\; \{FY\}\; \{W\}\; \{H\}\; \{ML\}\; \{IV\}\; \{P\}\; \{G\}\; \{A\}\; \{T\}\; \{S\}\; \{Q\}\; \{E\}\; \{RK\}\; \{N\}\; \{D\}\; \\
& \{C\}\; \{FY\}\; \{W\}\; \{H\}\; \{M\}\; \{L\}\; \{IV\}\; \{P\}\; \{G\}\; \{A\}\; \{T\}\; \{S\}\; \{Q\}\; \{E\}\; \{RK\}\; \{N\}\; \{D\}\; \\
& \{C\}\; \{FY\}\; \{W\}\; \{H\}\; \{M\}\; \{L\}\; \{I\}\; \{V\}\; \{P\}\; \{G\}\; \{A\}\; \{T\}\; \{S\}\; \{Q\}\; \{E\}\; \{RK\}\; \{N\}\; \{D\}\; \\
& \{C\}\; \{F\}\; \{Y\}\; \{W\}\; \{H\}\; \{M\}\; \{L\}\; \{I\}\; \{V\}\; \{P\}\; \{G\}\; \{A\}\; \{T\}\; \{S\}\; \{Q\}\; \{E\}\; \{RK\}\; \{N\}\; \{D\}\; \\
& \{C\}\; \{F\}\; \{Y\}\; \{W\}\; \{H\}\; \{M\}\; \{L\}\; \{I\}\; \{V\}\; \{P\}\; \{G\}\; \{A\}\; \{T\}\; \{S\}\; \{Q\}\; \{E\}\; \{R\}\; \{K\}\; \{N\}\; \{D\}\;
\end{eqnarray*}
}
 \caption{Alphabet {\russianT} obtained with the method of Section
 \ref{subsubsection:ab-initio-clustering-algorithm}}  
 \label{figure:amino_acids_neigboor_joining}
\end{figure}
\textit{Resulting alphabet.}
\label{subsubsection:ab-initio-clustering-example}
An alphabet, called {\russianT}, obtained with this greedy neighbor-joining approach
is given in
Figure~\ref{figure:amino_acids_neigboor_joining}. It will be used in
experiments presented later in Section~\ref{section:experiments}. 

\subsection{Non-hierarchical alphabets}
\label{subsection:non-hierarchical}

Previous approaches (Sections~4.1 and 4.2) were based on 
requirement (\ref{equation:refinement}) specifying that letters of the seed
alphabet should be embedded one into another to form a ``nested''
hierarchy. This requirement is biologically motivated and, on the
other hand, computationally useful as it reduces considerably the
space of possible letters. However, this requirement is not necessary
to implement the direct indexing (see Introduction). Therefore, we
also designed non-hierarchical alphabets in order to compare them to
hierarchical ones. 

To design non-hierarchical alphabets, we used a heuristic
generalizing the one of Section~\ref{subsection:ab-initio-clustering}. 
The heuristic consists of two stages: first, generate a big number
(several thousands)
of ``reasonable''  candidate letters, and then select from them
an alphabet containing ${\sim}20$ transitive letters (not necessarily
forming a hierarchy). 

The algorithm of the first stage exploits the standard paradigm of
genetic algorithms: it consequently creates ``generations'' of
transitive letters. The initial population  consists of a single
``identity'' seed letter. At the $k$-th iteration ($k = 1, \ldots,
19$), each letter generates $p$ descendants, each having $(20-k)$
sets. 

To generate descendants of a letter from the $k$-th generation,
we use the algorithm given in
Section~\ref{subsubsection:ab-initio-clustering-algorithm} but
maintain $p$ (instead of just 
one) best partitions according to the likelihood of the ``bridge''. The $(k+1)$-th
generation is selected among all descendants of the $k$-th generation
by selecting those $q$ letters $\alpha$ which have the highest likelihood ratio
$f(\alpha)/b(\alpha)$. With $p=100$ and $q = 500$ the
procedure gives about $8000$ candidate letters. 

To select a small number of those letters to form an alphabet, we tried
different heuristics based on the following two ideas: (1) letters with high
likelihood ratio are preferred (2) alphabet letters should have a
range of different weights. The second option produced a better
alphabet.

\textit{Resulting alphabet.}
We selected twenty letters out of about $8000$ candidates by
partitioning the candidates into twenty groups according to their
weight ranging from $0$ to $1$ with increment $0.05$, and by picking
in each group the letter with maximal likelihood. 
An alphabet obtained with the above heuristic, called {\russianNT}, 
is shown in
Figure~\ref{alph-non-hierar}. 
This alphabet will be used in the experiments reported in
Section~\ref{section:experiments}. 

\begin{figure}[h!]
{\tiny
\begin{eqnarray*}
& \{ARNDCQEGHILMKFPSTWYV\}\;\\
& \{ARNDQEGHILMKFPSTWYV\}\; \{C\}\;\\ 
& \{ARNDCQEHILMKFPSTWYV\}\; \{G\}\;\\ 
& \{ARNDQEHILMKFSTYV\}\; \{CGPW\}\;\\
& \{ARCQEHILMKFSTYV\}\; \{NDGPW\}\;\\
& \{ARNDCQEGHKPST\}\; \{ILMFWYV\}\;\\
& \{ARNDQEGHKST\}\; \{CILMFWYV\}\; \{P\}\;\\
& \{ARNDQEHKPST\}\; \{CW\}\; \{G\}\; \{ILMFYV\}\;\\
& \{ARNDQEKST\}\; \{CP\}\; \{GHW\}\; \{ILMFYV\}\;\\
& \{AGPST\}\; \{RNDQEHK\}\; \{C\}\; \{ILMFWYV\}\;\\
& \{APST\}\; \{RNDQEHK\}\; \{CW\}\; \{G\}\; \{ILMFYV\}\;\\
& \{AGST\}\; \{RNDQEK\}\; \{C\}\; \{HFWY\}\; \{ILMV\}\; \{P\}\;\\
& \{AST\}\; \{RNDQEK\}\; \{CH\}\; \{G\}\; \{ILMV\}\; \{FWY\}\; \{P\}\;\\
& \{AST\}\; \{RQEHK\}\; \{ND\}\; \{CP\}\; \{G\}\; \{ILMV\}\; \{FWY\}\;\\
& \{AST\}\; \{RQK\}\; \{NH\}\; \{DE\}\; \{C\}\; \{G\}\; \{ILMV\}\; \{FWY\}\; \{P\}\;\\
& \{A\}\; \{RQK\}\; \{N\}\; \{DE\}\; \{C\}\; \{G\}\; \{H\}\; \{ILMV\}\; \{FY\}\; \{P\}\; \{ST\}\; \{W\}\;\\
& \{A\}\; \{RK\}\; \{N\}\; \{DE\}\; \{C\}\; \{QH\}\; \{G\}\; \{ILV\}\; \{M\}\; \{FY\}\; \{P\}\; \{ST\}\; \{W\}\;\\
& \{A\}\; \{RQK\}\; \{ND\}\; \{C\}\; \{E\}\; \{G\}\; \{H\}\; \{IV\}\; \{LM\}\; \{FWY\}\; \{P\}\; \{ST\}\;\\
& \{A\}\; \{RK\}\; \{ND\}\; \{C\}\; \{Q\}\; \{E\}\; \{G\}\; \{H\}\; \{IV\}\; \{LM\}\; \{FWY\}\; \{P\}\; \{S\}\; \{T\}\;\\
& \{A\}\; \{RK\}\; \{N\}\; \{D\}\; \{C\}\; \{Q\}\; \{E\}\; \{G\}\; \{H\}\; \{IV\}\; \{L\}\; \{M\}\; \{FY\}\; \{P\}\; \{S\}\; \{T\}\; \{W\}\;\\
& \{A\}\; \{R\}\; \{N\}\; \{D\}\; \{C\}\; \{QE\}\; \{G\}\; \{H\}\; \{I\}\; \{L\}\; \{K\}\; \{M\}\; \{FWY\}\; \{P\}\; \{S\}\; \{T\}\; \{V\}\;\\
& \{A\}\; \{R\}\; \{N\}\; \{D\}\; \{C\}\; \{Q\}\; \{E\}\; \{G\}\; \{H\}\; \{I\}\; \{L\}\; \{K\}\; \{M\}\; \{F\}\; \{P\}\; \{S\}\; \{T\}\; \{W\}\; \{V\}\;
\end{eqnarray*}
}
\caption[Non-hierarchical alphabet]{Non-hierarchical alphabet {\russianNT} designed with the 
algorithm of Section~\ref{subsection:non-hierarchical}. 
}
\label{alph-non-hierar}
\end{figure}

\section{Experiments}
\label{section:experiments}
This section describes the experiments we made to
test the efficiency of seeds we designed with different
methods of previous sections. 
Sections~\ref{subsection:methods} - \ref{blast-evaluation} describe
the experimental protocol, from the assignment of background
and foreground probabilities, to the seed design. In
Section~\ref{subsection:results_theoretical}, we analyze the power of
different seed models proposed in
Sections~\ref{section:non-transitive-seed-alphabet}-\ref{section:transitive-seed-alphabet}
with respect to probabilistic models. 
Then in Section~\ref{subsection:results_real_data}, we benchmark the
performance of seeds built over different alphabets from
Section~\ref{section:transitive-seed-alphabet} against {\sc Blastp}, on
several reference protein databases. 
For
Sections~\ref{subsection:results_theoretical} and \ref{subsection:results_real_data},
all relative experimental data including scripts, designed alphabets, seeds and seed
families, and resulting sensitivity and selectivity measures, have
been collected in a supplementary Web page available at
{\tt\small\url{http://bioinfo.lifl.fr/yass/iedera_proteins/}}.

\subsection{Probability assignment and alphabet generation}
\label{subsection:methods}
First of all, we derived probabilistic models in accordance with the
BLOSUM62 data from the original paper \cite{BLOSSUM92}. We obtained the
BLOCKS database (version 5) \cite{BLOCKS91} and the software of
\cite{BLOSSUM92} to infer Bernoulli probabilities for the background
and foreground alignment models. These probabilities have been used
throughout the whole pipeline of experiments. 

Different seed alphabets have then been generated by the methods 
presented in Section~\ref{section:non-transitive-seed-alphabet}
(alphabet \nontrans),
Section~\ref{subsection:prebuilded-tree} (alphabet \polish),
Section~\ref{subsection:ab-initio-clustering} (alphabet \russianT) and
Section~\ref{subsection:non-hierarchical} (alphabet \linebreak[4]\russianNT). 

\subsection{Seed design}
\label{seed-design}

To each alphabet, we applied a seed design procedure that we briefly
describe now. 
Since each seed (or seed family) is characterized by two parameters 
-- sensitivity and selectivity -- 
it can be associated with a point on a 2-dimensional 
plot. Best seeds are then defined to be those which belong to
the {\em Pareto} set among all seeds, i.e. those than cannot be
strictly improved by increasing sensitivity, selectivity, or both. 

For different selectivity levels, we designed good seed families
containing one to six individual seeds, among which the best family was
selected. In each seed family, individual seeds have been
chosen to have approximately the same weight, within 5\%
tolerance. This requirement is natural as in the case of
divergent weights, seeds with lower weight would dominantly affect the
performance. In practice, having individual seeds of similar
weight allows an efficient parallel implementation
(see e.g. \cite{PeterlongoEtAlPBC07}). 

Estimation of sensitivity of individual seeds or seed families has
been done with the algorithm described in
\cite{KucherovNoeRoytberg06} and implemented in the {\sc Iedera}
software, available at {\tt {\small
    \url{http://bioinfo.lifl.fr/yass/iedera.php}}}. 
The selectivity of an individual seed has been computed according to
the definition (Section~\ref{section:preliminaries}). For a seed
family, its selectivity has been lower-estimated by summing the
background probabilities of individual seeds. 

Seed family design has been done using a hill climbing heuristic (see
\cite{BuhlerKeichSunRECOMB03,IlieIlieBIOCOMP07}) alternating seed
generation and seed estimation steps. 
All experiments were conducted for alignment lengths 16 and 32. 

\subsection{{\sc Blastp} and the vector seed family from \cite{BrownTCBB05}}
\label{blast-evaluation}

Our goal is to compare between different seed design approaches
proposed in this paper, but also to benchmark them against other
reference seeding methods. We used two references: the {\sc Blastp} 
seeding method and the family of vector seeds proposed in
\cite{BrownTCBB05}. Both of them use a score (or weight) resulting
from the cumulative contribution of several neighboring positions to
define a hit (see Introduction). Therefore, they use a more powerful
(and also more costly to implement) formalism of seeding. 

To estimate the sensitivity and selectivity of those seeds, we
modified our methods described in the previous section by representing
an alignment by a sequence of possible individual scores. Foreground
and background probability of each score is easily computed from those
for amino acid pairs. After that,
sensitivity and selectivity is computed similarly to the previous
case. 

\subsection{Results on theoretical models}
\label{subsection:results_theoretical}

We compare the performance of the different approaches by
plotting ROC curves of Pareto-optimal sets of seeds on the
selectivity/sensitivity graph. The two plots in Figure
\ref{figure:ROC-theoretical-zoom} show the results for alignment
length 16 and 32 respectively. Red and green polylines show the 
performance of {\sc Blastp} with word size 3 and the vector seed
family from \cite{BrownTCBB05}, for different score thresholds. The
other curves show the performances of different seed alphabets
from
Sections~\ref{section:non-transitive-seed-alphabet}-\ref{section:transitive-seed-alphabet}
represented by the Pareto-optimal seeds (seed families) that we were able to
construct over those alphabets. As mentioned earlier in
Section~\ref{seed-design}, each time we selected the best seed family
among those with different number of individual seeds. Typically (but
not exclusively), points on the plots correspond to seed families with
4 to 6 seeds.
Typically, the seed span ranges between 3 and 5 (respectively, 3 and 6) 
for alignment length 16 (respectively, 32). 
Seeds with larger span ($>4$) tend to occur in seed families with larger
number of seeds ($>3$). 
\begin{figure}[h!]\center
\includegraphics[width=12cm]{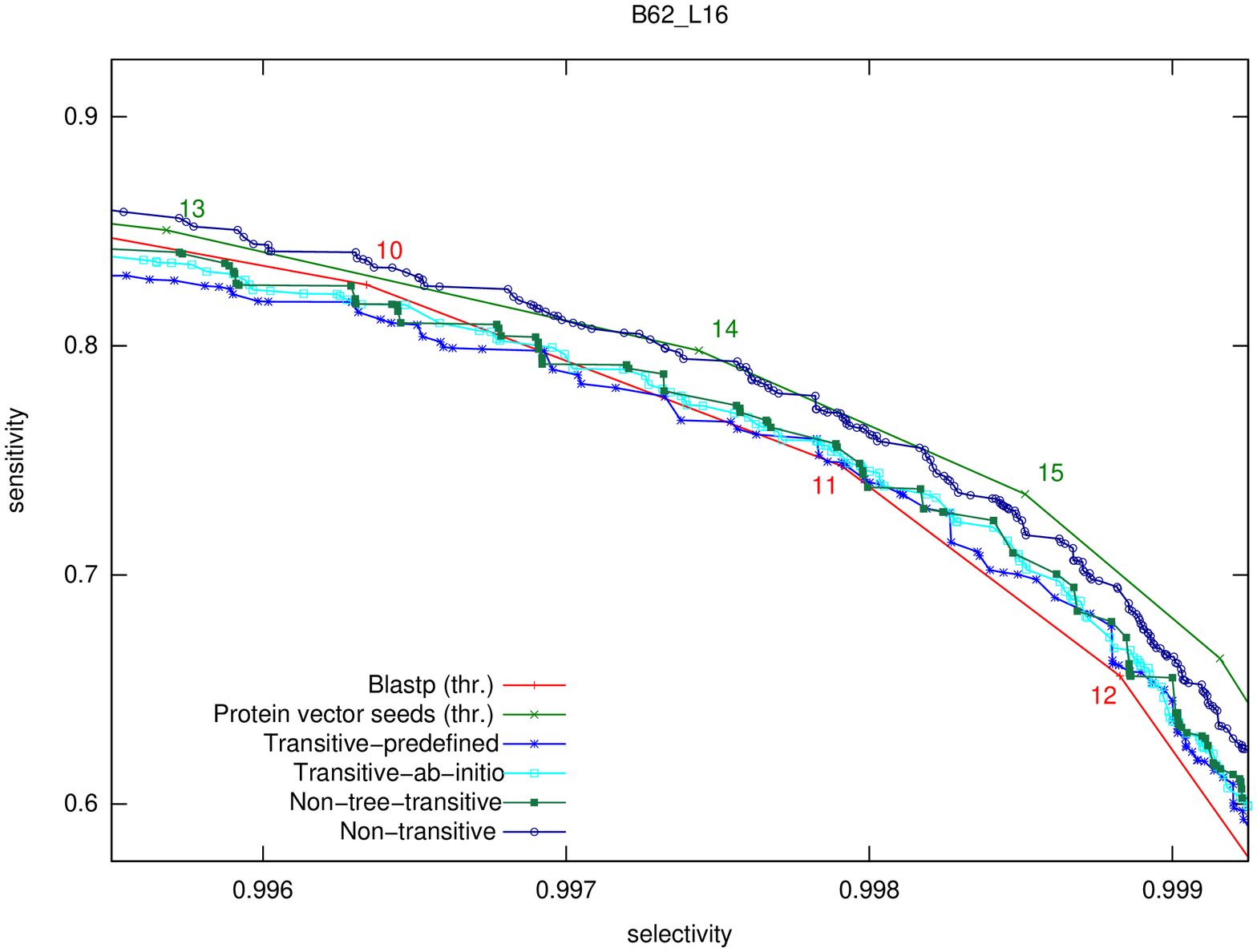}
\includegraphics[width=12cm]{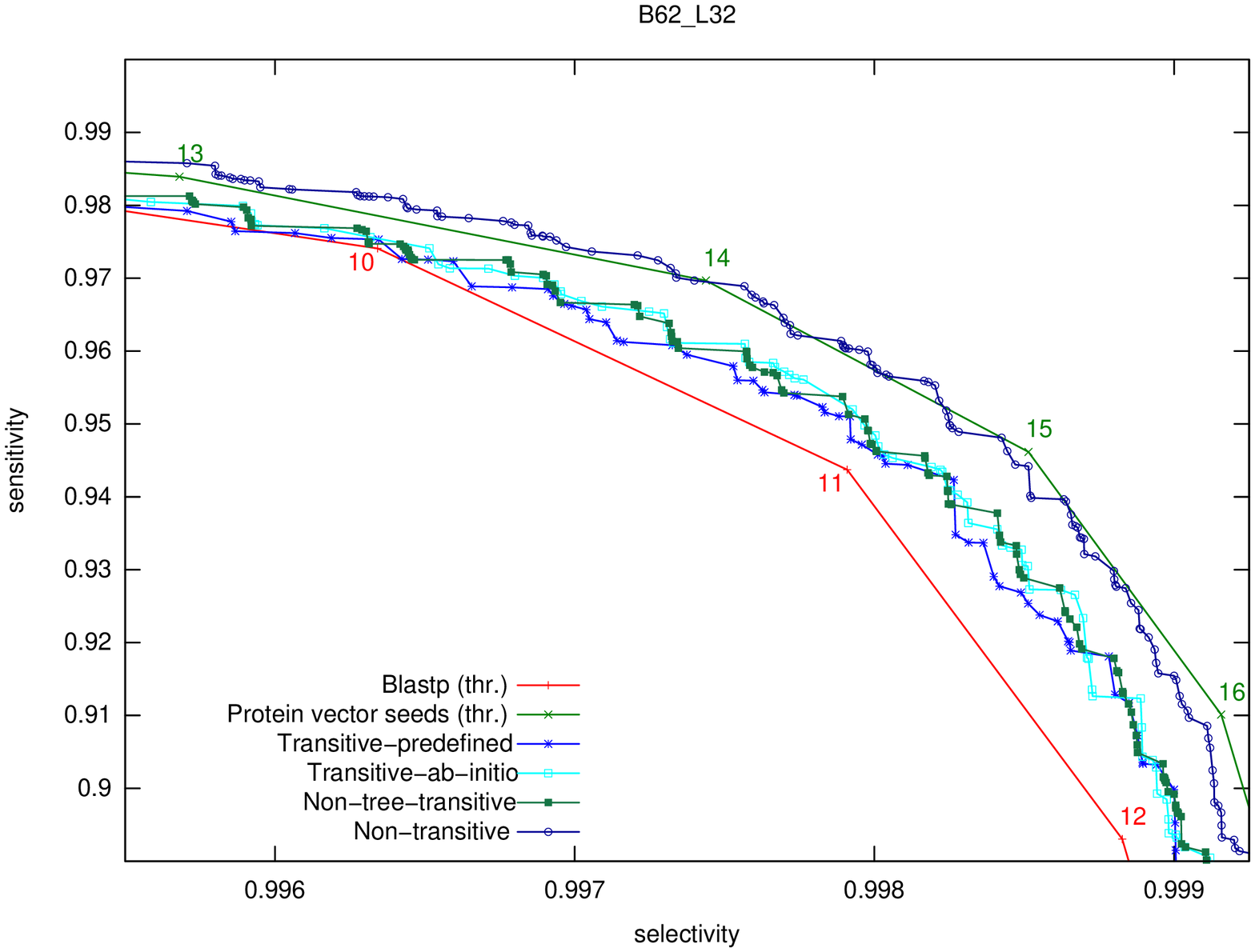}
\caption[ROC curves on theoretical
  models]{\label{figure:ROC-theoretical-zoom}ROC curves of seed
  performance measured on probabilistic models for alignment length 16 (above) and 32 (below). Blue, cyan dark green and dark bleu curves represent Pareto-optimal seed families constructed respectively over alphabets {\polish}, {\russianT}, {\russianNT} and {\nontrans}. Each point of these curves corresponds to a seed family, typically 3 to 5 seeds (respectively, 3 to 6 seeds) for alignment length 16 (respectively 32). Red and green polylines show the 
performance of {\sc Blastp} with word size 3 and the vector seed
family from \cite{BrownTCBB05}, for different score thresholds.}
\label{ROC-curves}
\end{figure}

We observe that seeds over the alphabet of
Section~\ref{section:non-transitive-seed-alphabet}  (dark blue curve)
are comparable in performance with the vector seed
family from \cite{BrownTCBB05} and clearly outperform seeds over
other alphabets. This result is interesting in itself, although in many cases this
alphabet is not practical due to its incompatibility with
the transitivity condition. 

As for the other alphabets, they roughly show a comparable
performance among them. 
For the alignment length 16, our seeds perform comparably to
{\sc Blastp}, with a slightly better performance for high thresholds
and a slightly worse performance for low thresholds. On the other
hand, for alignments of length 32, our seeds clearly outperform {\sc
  Blastp}. 
Note that the non-hierarchical alphabet from
Section~\ref{subsection:non-hierarchical} does not bring much of
improvement, which might indicate that lifting condition
(\ref{equation:consistent}) does not bring much of additional
power. This point, however, requires further investigation. 

\subsection{Results on real data}
\label{subsection:results_real_data}
We made large-scale tests of our seeds on real data by applying them
to several main databases of protein alignments. Those databases are
{\sc {B}ali{BASE}} (version 3)~\cite{BALIBASE01}, 
{\sc HOMSTRAD}~\cite{HOMSTRAD04}, 
{\sc {IRMB}ase} (version 1)~\cite{IRMBASE05},
{\sc {OXB}ench} (version 1.3)~\cite{OXBENCH03}, 
{\sc PFAM} (release 22)~\cite{PFAM06}, 
{\sc PREFAB} (version 4)~\cite{PREFAB04},
and {\sc SMART} (version 4)~\cite{SMART06}.

First, since all above databases except for {\sc OXBench} contain 
{\em multiple} 
alignments, we extracted from each of them a dataset of 
{\em pairwise} alignments. For this, pairs of aligned sequences have
been randomly extracted from multiple alignments and matching gaps
removed. To avoid a bias induced by big (in terms of the number of
sequences) multiple alignments, we selected a smaller fraction of
pairwise alignments from big multiple alignments than from small
ones: the number of selected alignments varied from order of $n^2$ for
small alignments to $\sqrt n$ for big ones. 
The total number of alignment
processed in our experiments varied from 640 ({\sc IRMBase}) to more
than 250000 ({\sc PFAM}).

For all those datasets, we identified alignments detected by the
{\sc Blastp} seed for different score thresholds (word length 3,
BLOSUM62 matrix, score threshold 10 to 13). 
On the other hand, for each {\sc Blastp} score threshold, 
we identified the closest seed family in the Pareto set (cf
Section~\ref{seed-design}) with equivalent or greater
selectivity. This has been done for each of the three transitive
alphabets proposed in
Section~\ref{section:transitive-seed-alphabet}. Selected seeds can be
found at the supplementary material Web page
{\tt\small\url{http://bioinfo.lifl.fr/yass/iedera_proteins/}}. 

Results are shown on Figure~\ref{figure:sensitivity-real-data}. 
Both methods detect a very high fraction of alignments of {\sc
  IRMBase} (all of them for thresholds 10 and 11). 
The poorest sensitivity 
is observed on {\sc SMART} where alignments represent small sequences of
proteins domains of the same family. A relatively weak 
sensitivity on {\sc PREFAB} is due to its method of obtaining
alignments which is based on structural information and, at the first
step, ``does not incorporate sequence similarity''. 
Finally, {\sc HOMSTRAD} combines both structural information (using
FUGUE~\cite{JOYFUGUE01}) and sequence information (using
  PSI-BLAST~\cite{GBLAST97}) which explains a better performance of
  seed-based search in this case.

\begin{figure}[b!]
  \includegraphics[width=14cm]{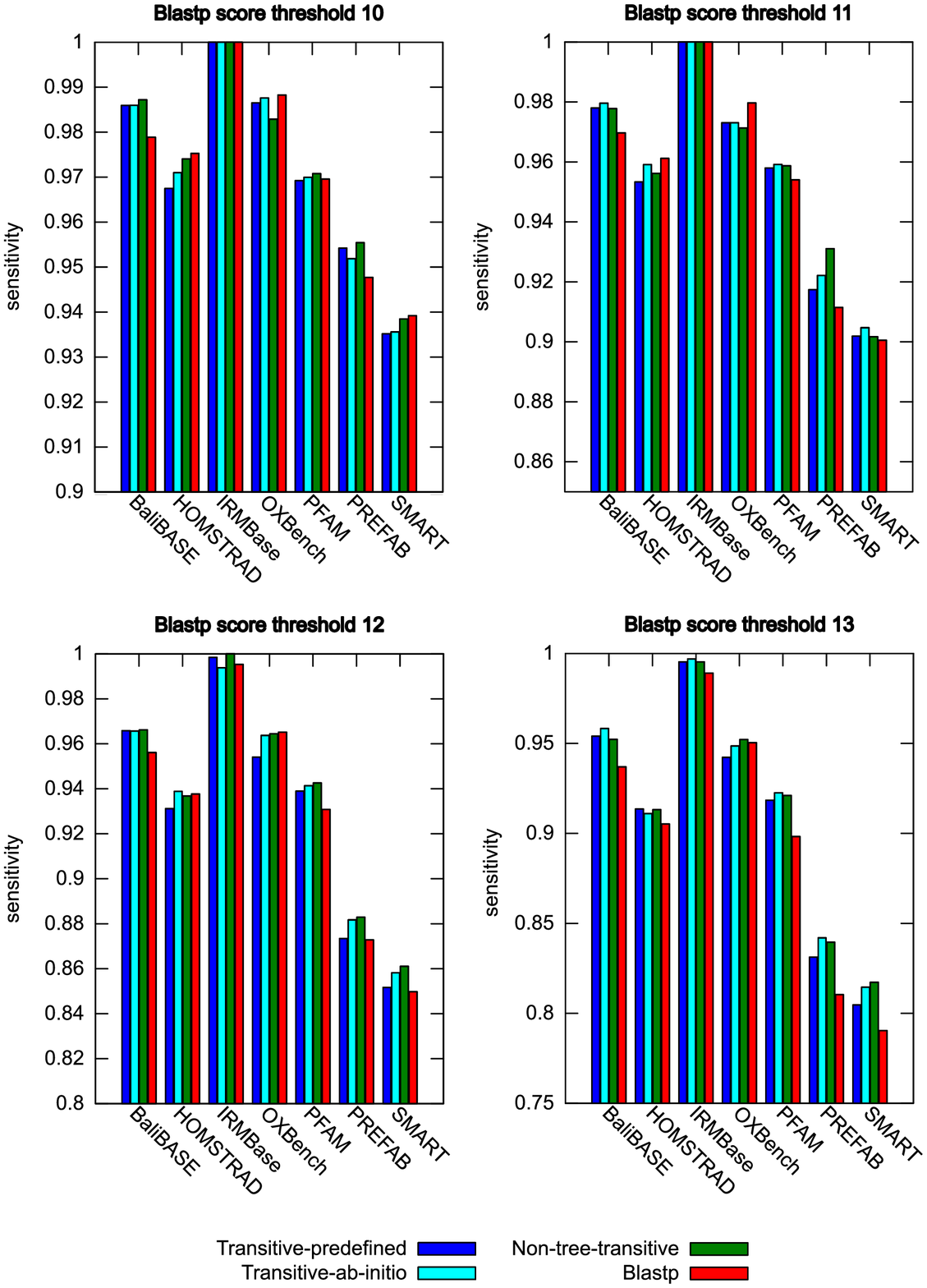}
  \caption{\label{figure:sensitivity-real-data}Sensitivity of subset seeds vs {\sc Blastp} measured on benchmark alignment databases}
\end{figure}
Comparing the performance of subset
seeds vs. {\sc Blastp}, the former show clearly a better performance on {\sc BaliBASE}, {\sc
  PREFAB} and {\sc PFAM}. On {\sc OXBench}, {\sc HOMSTRAD} and 
{\sc SMART}, the obtained sensitivity is very close
to that of {\sc Blastp}. Globally, subset seeds show a better
performance for higher selectivity levels (greater thresholds).

\section{Conclusion}

In this paper, we studied the design of {\em subset seeds} for
protein alignments, which is a very attractive seeding principle that
does not use scores at the hitting stage of the alignment
procedure. The design of efficient subset seeds subsumes a
design of appropriate {\em seed alphabets}, i.e. sets of {\em seed
  letters} that seeds can be built from. In this paper, we studied
several approaches to designing alphabets. 
In
Section~\ref{section:non-transitive-seed-alphabet}, we considered the most
general case when seed letters are only required to
induce a symmetric binary relation on amino acids. In
Section~\ref{section:transitive-seed-alphabet}, we focused on {\em
  transitive seed alphabets}, where seed letters are required to
induce an equivalence relation. In
Section~\ref{subsection:prebuilded-tree} we proposed an alphabet construction based on
{\em pre-defined} hierarchical clusterings of amino acids, while in
Section~\ref{subsection:ab-initio-clustering}, we considered a
construction based on an {\em ad hoc} clustering of amino acids based
on the likelihood ratio measure. Finally, in
Section~\ref{subsection:non-hierarchical} we lifted the requirement of
hierarchical clustering and considered alphabets with possibly
``incompatible'' letters (in the sense of embedding of equivalence
classes). 

The main conclusion of our work is that although the subset seed model
is less expressive than the method of cumulative score used in 
{\sc Blastp}, carefully designed subset seeds can reach the same or
even a higher performance. To put it informally, the use of the
cumulative score in defining a hit can, without loss of performance,
be replaced by a careful distinction between different amino acid
matches without using any scoring system. From a practical point of 
view, subset seeds can provide a more efficient implementation,
especially for large-scale protein comparisons, due to
a much smaller number of accesses to the hash table. In particular,
this can be very useful for parallel implementations or specialized
hardware (see e.g. \cite{PeterlongoEtAlPBC07,NguyenLavenierRIVF08}). 

Interestingly, the {\sc Blast} team reported recently in
\cite{ShiryevEtAlBioinfo07} that they used a reduced amino acid
alphabet in order to allow for longer seeds while still keeping the
hash table of acceptable size. (Note also that this idea has recently
been independently applied in \cite{PeterlongoEtAllBMCBioinformatics08}, in a
slightly different context.) This is done, however, by translating
one of the sequences into a compressed alphabet and still using
neighborhoods and a cumulative hit criterion. In this work, we
demonstrated that instead of that, one can apply
carefully designed subset seeds to avoid using neighborhoods and
scoring systems at the seeding stage, without sacrificing the
performance.  

Note that the seed design heuristic sketched in
Section~\ref{seed-design} does not guarantee to compute optimal
seeds, and therefore our seeds could potentially be further improved
by a more advanced design procedure, possibly bringing a further
increase in performance. This is especially true for seeds of large
weight (due to a bigger number of those), for which our seed design
procedure could produce non-optimal seeds, thus explaining some
``drop-offs'' in high-selectivity parts of plots of
Figure~\ref{ROC-curves}.  


As far as further research is concerned, the question of efficient
seed design remains an open issue. Improvements of the hill climbing
heuristic used in this work are likely to be possible. 



Finally, it would be very interesting to further study the
relationship between optimal seeds and seed letters those seeds
contain. In particular, it often appeared in our experiments that
optimal seeds contained ``non-optimal'' seed letters. Understanding
this phenomenon is an interesting theoretical question for further 
study. 

\textit{Acknowledgments.}
Parts of this work have been done during visits to LIFL of Ewa
Szczurek (June-August 2006), Anna Gambin and S\l{}awomir Lasota
(August 2006) and Mikhail Roytberg (October-December 2006). These
visits were supported by the ECO-NET and Polonium programs of the
French Ministry of Foreign Affairs.
Laurent No\'e was supported by the ANR project CoCoGen (BLAN07-1\_185484).
Mikhail Roytberg and Eugenia Furletova were supported by grants RFBR
06-04-49249 and 08-01-92496, and INTAS 05-10000008-8028.  The authors thank Ivan
Tsitovich for fruitful discussions of statistical questions related to
this work, and Mathieu Giraud and Marta Girdea for commenting on the 
manuscript. 

\bibliographystyle{IEEEtran}
\bibliography{paper}
\end{document}